\newcommand{\vol}{\textbf{vol}}
\newcommand{\cut}{\textbf{cut}}
\providecommand{\mat}[1]{\boldsymbol{\mathrm{#1}}}%
\renewcommand{\vec}[1]{\boldsymbol{\mathrm{#1}}}
\DeclareMathOperator{\argmin}{argmin}
\providecommand{\subjectto}{\ensuremath{\text{subject to}}}
\providecommand{\mA}{\ensuremath{\mat{A}}}
\providecommand{\vb}{\ensuremath{\vec{b}}}
\providecommand{\vc}{\ensuremath{\vec{c}}}
\providecommand{\vx}{\ensuremath{\vec{x}}}
\newcommand{\CC}{\ensuremath{\mathcal{C}}\xspace}
\newcommand{\lcc}{\textsc{LambdaCC}\xspace}
\newtheorem{theorem}{Theorem}
\newcommand{\tony}[1]{}
\renewcommand{\tony}[1]{{\textcolor{blue}{[{#1}\ --\ AIW]}}}
\newcommand{\pp}{\ensuremath{\mathcal{P}}}
\algnewcommand\algorithmicswitch{\textbf{switch}}
\algnewcommand\algorithmiccase{\textbf{case}}
\newcommand{\qq}{\mathcal{Q}}
\begin{document}
	
\title{Learning Resolution Parameters for Graph Clustering}

\author[a]{Nate Veldt}
\author[b]{David F. Gleich} 
\author[c]{Anthony Wirth}

\affil[a]{Purdue University Mathematics Department}
\affil[b]{Purdue University Computer Science Department}
\affil[c]{The University of Melbourne, Computing and Information Systems School}
\date{}
\maketitle

	\begin{abstract}
		Finding clusters of well-connected nodes in a graph is an extensively studied problem in graph-based data analysis. Because of its many applications, a large number of distinct graph clustering objective functions and algorithms have already been proposed and analyzed. 
		To aid practitioners in determining the best clustering approach to use in different applications, we present new techniques
		for automatically learning how to set clustering resolution parameters. These parameters control the size and structure
		of communities that are formed by optimizing a generalized objective function. We begin by formalizing the notion of a
		parameter fitness function, which measures how well a fixed input clustering 
		approximately solves a generalized clustering objective for a specific resolution parameter value.
		Under reasonable assumptions, which suit two key graph clustering applications,
		such a parameter fitness function can be efficiently minimized using a bisection-like method, yielding a resolution
		parameter that fits well with the example clustering. We view our framework as a type of \emph{single-shot} hyperparameter
		tuning, as we are able to learn a good resolution parameter with just a single example. Our general approach can be applied to learn resolution parameters for both local and global graph clustering objectives. We demonstrate its utility in several experiments on real-world data where it is helpful to learn resolution parameters from a given example clustering.
	\end{abstract}

\section{Introduction}
Partitioning a collection of items into groups of similar items -- that is, clustering -- is a fundamental computational task.
So commonly applied, there is a large and still-growing suite of objective functions, algorithms,
and techniques for identifying \emph{good} clusters.
One powerful mathematical model for clustering is the graph, comprising nodes and (undirected) edges. For a broad overview of graph clustering, refer to any one of a number of surveys~\cite{schaeffer2007graphclustering,fortunato2010,fortunato2016,porter2009communities}.
Nearly all clustering approaches favor clusters with high internal edge density and a low external edge density. A related, but not identical, notion is that a good cluster is a set of nodes with a small cut (i.e., few edges leaving the set), and a nontrivial size (e.g., a large number of nodes, or many internal edges). 

Although most clustering approaches follow these general principles, there are many different ways to formalize such
goals mathematically. In practice, this array of objective functions yields a large variety of
different output clusterings.
Indeed, many existing (theoretical)
approaches to graph clustering assume that the user knows \emph{a priori} which objective function
is appropriate for their context or job.
The main design task, leading to a practical solution, is then to develop good algorithms that exactly, or approximately, optimize the objective. However, we propose that it is more natural to assume that the user starts with some a priori knowledge about the desired \emph{structure} of clusters in a given application domain.
More specifically, they can provide at least \emph{one} example of what a \emph{good} clustering should look like. The revised goal is to find an objective function whose optimization yields the desired type of output.

\paragraph{Our approach}
In this article, we show how to bootstrap from this one quality clustering to learn the appropriate objective function, chosen from a parameterized family of objective functions.
The efficiency of our technique relies on the clustering objective function being linear, with linear constraints, tuned by a single parameter $\beta$.
Given this type of objective, we formalize the notion of a parameter fitness function, which relies on a fixed example clustering~$C_x$ of a network, and takes a parameter $\beta$ as input. The fitness function computes the ratio between the objective score of~$C_x$, when~$\beta$ is chosen as the input parameter, and a lower bound on the optimal clustering objective for that~$\beta$. Specifically, we use a concave piecewise linear lower bound over a wide family of what we refer to as \emph{relaxed clusterings}. Minimizing the fitness function produces a parameter (and a corresponding objective function) that~$C_x$ exactly or at least \emph{nearly} optimizes. Thus the aim, which can be realized via a guided binary search, is to identify the parameter setting in which~$C_x$ most stands out. In the remainder of this introduction, we flesh out the context for our technique.

\paragraph{Parameters}
There are multiple families of objective functions
whose members are specified by a tunable resolution parameter.
This parameter controls the size and structure of clusters that are formed by optimizing the objective.
Key examples of such \emph{generalized clustering objective functions} include the Hamiltonian objective studied by Reichardt and Bornholdt~\cite{reichardt2006statistical}, the {stability} objective of Delvenne
et al.~\cite{delvenne2010stability}, and a multi-resolution variant of the map equation~\cite{schaub2012multiscale}.

In this manuscript we focus on a related clustering framework that we developed in previous work~\cite{Veldt:2018:CCF:3178876.3186110}, based on correlation clustering~\cite{Bansal2004correlation}. This framework is named \lcc, after its resolution
parameter~$\lambda$, which implicitly controls both the internal edge density as well as the cut sparsity of nodes formed by minimizing the objective. Furthermore, \lcc generalizes several well-studied objectives such as modularity
clustering, sparsest cut, normalized cut, and cluster deletion. All of these objectives can be viewed as special cases
of the objective for appropriate settings of~$\lambda$. 

\paragraph{Global and local}
The above objectives are specifically designed for global clustering, in which the goal is to find a multi-cluster partitioning of an input graph. Local clustering objectives relying on resolution parameters also exist; these focus on finding a single cluster in a localized region of a graph. Flow-based methods such as FlowImprove~\cite{AndersenLang2008}, LocalImprove~\cite{OrecchiaZhu2014}, and SimpleLocal~\cite{VeldtGleichMahoney2016} fit into this category.
These methods repeatedly solve minimum $s$-$t$ cut problems for different values of a parameter~$\alpha$, in
order to minimize a ratio-style objective related to a cluster quality measure called \emph{conductance}.
This~$\alpha$ can be viewed as a resolution parameter that balances a trade-off between forming a cluster with a small cut, and forming a cluster with a large overlap with a seed set in the graph.

Given the unifying power and versatility of generalized clustering objective functions, the challenge of finding the right clustering technique for a specific application can often be reduced to finding an appropriate resolution parameter. However, very little work has addressed how to set these parameters in practice, in particular to capture the specific clustering structure present in a certain application domain. In the past, solving generalized objective functions for a range of resolution parameters has been used to detect hierarchical clusterings in a network~\cite{reichardt2006statistical}, or as a way to identify~\emph{stable} clusterings, which consistently optimize the objective over a range of parameter values~\cite{delvenne2010stability,jeub2017multiresolution,schaub2012multiscale}. While both are important applications of resolution-based clustering, the ability to detect a specified type of clustering structure is important regardless of a clustering's stability or the hierarchical structure of a network. Finally, while tuning hyperparameters is a standard procedure in the broader machine learning literature, most existing approaches are not specifically designed for tuning graph clustering resolution parameters. Furthermore hyperparameter tuning techniques typically rely on performing cross validation over a large number of training examples. We are concerned with learning good resolution parameters from a \emph{single} example clustering that represents a meaningful partitioning in a certain application domain.

\paragraph{Our Contributions}
In this paper we develop an approach for learning how to set resolution parameters for both local and global graph clustering problems. Our results for global graph clustering rely heavily on the \lcc framework we developed in past work~\cite{Veldt:2018:CCF:3178876.3186110}. We begin by formally defining a parameter fitness function for a given clustering. We then prove that under reasonable assumptions on the input clustering and clustering objective function used, we can find the minimizer of such a fitness function to within arbitrary precision using a simple bisection-like method. Our approach can be viewed as a type of \emph{single-shot} hyperparameter tuning, as we are able to learn an appropriate setting of a resolution parameter when given a single example clustering. We display the utility of our approach in several local and global graph clustering experiments. Our approach allows us to obtain improved community detection results on synthetic and real-world networks. We also show how our method can be used to measure the correlation between metadata attributes and community structure in social networks.

\section{Graph Clustering Background}
This section reviews the global \lcc~\cite{Veldt:2018:CCF:3178876.3186110} clustering objective and a local clustering objective that is based on regionally biased minimum cut computations~\cite{OrecchiaZhu2014,AndersenLang2008}. While there do exist many other objectives for local and global clustering, we focus on these two as they both rely crucially on resolution parameters. 

\paragraph{Basic Notation}
In this paper we consider unweighted and undirected graphs $G = (V,E)$, though many of the ideas can be extended to weighted graphs. Global graph clustering separates $G$ into disjoint sets of nodes so that every node belongs to exactly one cluster. For local clustering, one is additionally given a set of reference or seed nodes $R \subset V$ and the objective is to find a good cluster that shares a nontrivial overlap with $R$. The degree of a node $i \in V$ is the number of edges incident to it; we denote this by $d_i$. The volume of a set $S \subseteq V$ is given by $\vol(S) = \sum_{i \in S} d_i$ and $\cut(S)$ measures how many edges cross from $S$ to its complement set $\bar{S} = V\backslash S$. Further notation will be presented as needed in the paper.

\subsection{Global Clustering with \lcc}
\label{sec:lamcc}
The \lcc objective is a special case of correlation clustering (CC)~\cite{Bansal2004correlation}, a framework for partitioning signed graphs. In correlation clustering, each pair of nodes $(u,v)$ in a signed graph is associated with either a positive edge or a negative edge, as well as a nonnegative edge weight $e_{uv}$ indicating the strength of the relationship between $u$ and $v$. Given this input, the goal is to produce a clustering which minimizes the weight of \emph{disagreements} or \emph{mistakes}, which occur when positive edges are placed between clusters or negative edges are placed inside clusters.

The \lcc framework takes an unsigned graph $G = (V,E)$, a resolution parameter $\lambda \in (0,1)$, and node weights $w_u$ for each $u \in V$. It converts this input into a signed graph over which the correlation clustering objective can then be minimized. The signed graph $\tilde{G} = (V,E^+, E^-)$ is constructed as follows: for every $(u,v) \in E$, if $1 - \lambda w_u w_v \geq 0$, form a positive edge $(u,v)$ in $\tilde{G}$, otherwise form a negative edge. In either case, the weight of this edge is $e_{uv} = |1 - \lambda w_u w_v|$. For every non-edge in the original graph ($(u,v) \notin E$), form a negative edge $(u,v) \in E^-$ in $\tilde{G}$ with weight $e_{uv} = \lambda w_u w_v$. The \lcc objective function then corresponds to the correlation clustering objective applied to $\tilde{G}$:
%
\begin{equation}
\label{eq:lcc}
{ \min \,\, \sum_{(u,v) \in E^+} e_{uv}(1-\delta_{uv}) + \sum_{(i,j) \in E^-} e_{uv} \delta_{uv}, }
\end{equation}
where $\delta_{uv}$ is a zero-one indicator function which encodes whether a clustering has placed nodes $u,v$ together ($\delta_{uv} = 1$), or apart $(\delta_{uv} = 0$). 
There are two main choices for node weights: $w_u = 1$ for all $u \in V$ is the \emph{standard} \lcc objective. For this simple case, we note that a node pair $(u,v)$ which defines an edge in $G$ will always correspond to a positive edge in $G$. In some applications it is useful to consider a \emph{degree-weighted} version in which $w_u = d_u$. In this case, if $\lambda \leq 1/(d_{max}^2)$ then we can still guarantee that $E = E^+$. However, for larger values of $\lambda$ it may be possible that an edge in $G$ gets mapped to a negative edge in $\tilde{G}$.

As a generalization of standard unweighted correlation clustering, \lcc is NP-hard, though many approximation algorithms and heuristics for correlation clustering have been developed in practice~\cite{ailon2008aggregating,Bansal2004correlation,charikar2005clustering,chawla2015near}. In our previous work~\cite{Veldt:2018:CCF:3178876.3186110}, we showed that a 3-approximation for standard \lcc can be obtained for any $\lambda \geq 1/2$ by rounding the following LP relaxation of objective~\eqref{eq:lcc}:
\begin{equation}
\label{eq:lccLP}
\begin{array}{lll} \text{minimize} & \sum_{(u,v) \in E} (1-\lambda) x_{uv} \,\, + & \sum_{(u,v) \notin E} \lambda (1-x_{uv})\\  \subjectto
& x_{uv} \leq x_{uw} + x_{vw} & \text{for all $u,v,w$} \\
& 0 \leq x_{uv} \leq 1 & \text{for all $u < v$.}
\end{array}
\end{equation}
Furthermore, even when a priori approximations are not guaranteed, solving the LP relaxation can be a very useful way to obtain empirical lower bounds for the objective in polynomial time. In follow up work~\cite{Veldt2018ccgen}, we provided improved approximations for $\lambda < 1/2$ based on rounding the LP, but noted an $\Omega(\log n)$ integrality gap for some small value of $\lambda$.


\paragraph{Equivalence Results} \lcc generalizes and unifies a large number of other clustering approaches. When $\lambda = 1/(2|E|)$, the degree-weighted version is equivalent to the popular maximum modularity clustering objective~\cite{newman2004modularity,newman2006finding}. Standard \lcc interpolates between the sparsest cut objective for a graph-dependent small value of $\lambda$, and the cluster deletion problem when $\lambda > |E|/(1+|E|)$. Given its relationship to modularity, \lcc is known to also be related to the stochastic block model~\cite{newman2013equivalence} and a multi-cluster normalized cut objective~\cite{yuModularityncut2}.

\subsection{Local Clustering Objectives}
We next consider a class of clustering objectives that share some similarities with~\eqref{eq:lcc}, but are designed for finding a single local cluster in a specific region of a large graph. With the input graph $G = (V,E)$, we additionally identify a set of seed or \emph{reference} nodes~$R$ around which we wish to form a good community. One common measure for the ``goodness'' of a cluster $S$ is the conductance objective:
\begin{equation}
\label{cond}
\phi(S) = {\cut(S)}/ ({\min\{ \vol(S), \vol(\bar{S})\} }),
\end{equation}
which is small when $S$ is connected very well internally but shares few edges with $\bar{S}$. A number of graph clustering algorithms have been designed to minimize local variants of~\eqref{cond}. These substitute the denominator of~\eqref{cond} with a measure of the overlap between an output cluster $S$ and the reference set $R$. One such objective is the following local conductance measure:
\begin{equation}
\label{localcond}
\phi_R(S) = \frac{\cut(S)}{\vol(R \cap S) - \varepsilon \vol(\bar{R} \cap S)},
\end{equation}
which is minimized over all sets $S$ such that the denominator of $\phi_R(S)$ is positive. This objective includes a \emph{locality} parameter $\varepsilon$ that controls how much overlap there should be between the seed set and output cluster. For a general overview of this clustering paradigm and its relationship to spectral and random-walk based techniques, we refer the reader to the work of Fountoulakis et al.~\cite{fountoulakis2017Optimization}. Specific algorithms which minimize variants of~\eqref{localcond} include FlowImprove~\cite{AndersenLang2008}, which always uses parameter $\varepsilon = \vol(R)/\vol(\bar{R})$, and LocalImprove~\cite{OrecchiaZhu2014} and SimpleLocal~\cite{VeldtGleichMahoney2016}, both of which choose larger values of $\varepsilon$ in order to keep computations more local. In the extreme case where we consider $\varepsilon = \infty$, the problem reduces to finding the minimum conductance \emph{subset} of a reference set $R$, which can be accomplished by the Minimum Quotient Improvement (MQI) algorithm of Lang and Rao~\cite{LangRao2004}.

Objective~\eqref{localcond} can be efficiently minimized by repeatedly solving a minimum $s$-$t$ cut problem on an auxiliary graph constructed from $G$, which introduces a sink node $s$ attached to nodes in $R$, and a source node $t$ attached to nodes in $\bar{R} = V \backslash R$. Edges are weighted with respect to the locality parameter $\varepsilon$ and another parameter $\alpha$. In order to detect whether there exists some set $S$ with $\phi_R(S) \leq \alpha$, one can solve a local clustering objective corresponding to the minimum $s$-$t$ cut objective on the auxiliary graph. We refer to this simply as the \emph{local flow clustering objective}:
\begin{equation}
\label{stcut}
\min \,\, f_\alpha(S) = \cut(S) + \alpha \vol(R \cap \bar{S}) + \alpha \varepsilon \vol(\bar{R} \cap S).
\end{equation}
If the set $S$ minimizing $f_\alpha$ satisfies $f_\alpha(S) < \alpha \vol(R)$, then rearranging terms one can show that $\phi_R(S) < \alpha$. Thus, by performing binary search over $\alpha$ or repeatedly solving~\eqref{stcut} for smaller and smaller $\alpha$, one can minimize the local conductance measure~\eqref{localcond}. 
%

Previous research has largely treated~$\alpha$ as a temporary parameter used in one step of a larger algorithm seeking to minimize~\eqref{localcond}. Algorithms which minimize~\eqref{localcond} do so by finding the smallest~$\alpha$ such that the minimum of~\eqref{stcut} is~$\alpha \vol(R)$. We depart from this approach by instead treating~$\alpha$ as a tunable resolution parameter for balancing two conflicting goals: finding clusters with a small cut, and finding clusters that have a large overlap with the seed set~$R$. In the case where~$\varepsilon$ is treated as infinitely large and we are simply looking for subsets of a seed set~$R$ satisfying $\vol(R) \leq \vol(\bar{R})$, then in effect we are trying to solve the optimization problem:
\begin{equation}
\label{mqiobj}
\min \cut(S) - \alpha \vol(S) + \alpha\vol(R)\,\, \text{ such that $S \subseteq R$}.
\end{equation}
This goal is related to, but ultimately should be contrasted with, the goal of minimizing the ratio $\cut(S)/\vol(S)$. The objectives are similar in that they both tend to prefer sets with small cut and large volume. We argue that treating $\alpha$ as a tunable parameter is in fact more versatile than simply minimizing the ratio score. In multiple applications it may be useful to find clusters with small cut and large volume, but different applications may put a different weight on each aspect of the objective.
We observe that $\varepsilon$ also plays an important role in the size and structure of the output community when it is less than $\infty$. For simplicity, in this paper we can treat this as a fixed constant, and in our experimental section we simply focus on objective~\eqref{mqiobj}.


\subsection{Parametric Linear Programs}
\label{sec:paraLP}
Before moving on we provide key background on parametric linear programming which will be important in our theoretical results. A standard linear program is a problem of the form
\begin{equation}
\label{eq:lp}
\min_{\vx} \,\, \vc^T \vx \,\, \text{such that $\mA \vx \leq \vb$}
\end{equation}
where $\vc, \vb$ are vectors and $\textbf{A}$ is a constraint matrix. A \emph{parametric} linear program is a related problem of the form
\begin{equation}
\label{eq:plp}
\min_{\vx} \,\, \vc^T \vx + \beta(\Delta\vc)^T\vx \,\, \text{such that $\mA \vx \leq \vb$}
\end{equation}
where $\Delta \vc$ is another vector of the same length as $\vc$ and $\beta$ is a parameter controlling the difference between~\eqref{eq:lp} and~\eqref{eq:plp}. We state a well-known result about the solutions of~\eqref{eq:plp} for different $\beta$. This result is not new; it follows directly from Proposition 2.3b from~\cite{adler1992geometric}.
\begin{theorem}
	\label{thm:concave}
	Let~$L(\beta)$ be the minimum of~\eqref{eq:plp} for a fixed~$\beta$. If we are given bounds~$a$ and~$b$ such that~$L(\beta) \in \mathbb{R}$ for all $\beta \in [a,b]$, then~$L$ is a piecewise linear and concave function in~$\beta$ over this interval.
\end{theorem}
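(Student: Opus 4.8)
The plan is to exploit the fact that the feasible region $P = \{\vx : \mA\vx \le \vb\}$ in~\eqref{eq:plp} does not depend on~$\beta$; only the objective does, and it depends on~$\beta$ affinely. Writing $L(\beta) = \min_{\vx \in P}\bigl(\vc^T\vx + \beta(\Delta\vc)^T\vx\bigr)$, I would first note that for each fixed feasible point~$\vx$ the map $\beta \mapsto \vc^T\vx + \beta(\Delta\vc)^T\vx$ is an affine function of~$\beta$, with slope $(\Delta\vc)^T\vx$ and intercept $\vc^T\vx$. Thus $L$ is the pointwise infimum, over the (possibly infinite) family indexed by $\vx \in P$, of affine functions of~$\beta$. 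Since an infimum of affine (hence concave) functions is concave, $L$ is concave on any interval where it is finite, in particular on $[a,b]$. This settles concavity and uses the finiteness hypothesis only to guarantee that the infimum is proper (never $-\infty$) there.

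The remaining and more delicate claim is piecewise linearity, which does \emph{not} follow from concavity alone, since the infimum of infinitely many affine functions can be an arbitrary smooth concave function. Here I would invoke the polyhedral structure: for each $\beta \in [a,b]$ the linear program has a finite optimum by hypothesis, and a bounded, feasible linear program over a pointed polyhedron attains its optimum at a vertex of~$P$. Because $P$ is defined by finitely many inequalities it has only finitely many vertices $\vx_1, \dots, \vx_k$, so on $[a,b]$ we may restrict the minimization to this finite set and obtain $L(\beta) = \min_{1 \le i \le k}\bigl(\vc^T\vx_i + \beta(\Delta\vc)^T\vx_i\bigr)$. The minimum of finitely many affine functions of~$\beta$ is precisely a piecewise linear concave function, whose breakpoints occur where two of these lines cross; this both re-derives concavity and yields the piecewise linear structure.

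The main obstacle is justifying the vertex-attainment step when $P$ is not pointed, i.e.\ when it contains a line, in which case the optimum need not sit at a vertex. I would handle this by decomposing $P = \mathcal{L} + P'$ into its lineality space~$\mathcal{L}$ and a pointed polyhedron~$P'$: finiteness of $L(\beta)$ on $[a,b]$ forces the effective cost $\vc + \beta\Delta\vc$ to vanish on~$\mathcal{L}$ for those~$\beta$, so the minimization reduces to the pointed part~$P'$, to which the finite-vertex argument applies. Alternatively, one can bypass this bookkeeping entirely by appealing to the cited Proposition~2.3b of~\cite{adler1992geometric}, of which the present statement is a direct specialization; the argument above is simply the self-contained reason that result holds in our setting.
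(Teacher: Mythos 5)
Your proof is correct, but it is worth noting that the paper does not actually prove Theorem~\ref{thm:concave} at all: it simply observes the result is not new and defers entirely to Proposition~2.3b of the cited reference on parametric linear programming. Your argument is therefore a genuinely different (self-contained) route. The two halves are both sound: concavity as the pointwise infimum of the affine functions $\beta \mapsto \vc^T\vx + \beta(\Delta\vc)^T\vx$ indexed by feasible $\vx$, and piecewise linearity via the fundamental theorem of linear programming, which lets you replace the minimization over all of $P$ by a minimization over its finitely many vertices whenever the optimum is finite, so that $L$ becomes a lower envelope of finitely many lines. You are also right that the second step is the delicate one --- concavity alone would not give piecewise linearity --- and your treatment of the non-pointed case via the lineality-space decomposition $P = \mathcal{L} + P'$ correctly identifies why finiteness of $L$ on $[a,b]$ forces the effective cost vector to annihilate $\mathcal{L}$ there, reducing to the pointed polyhedron $P'$. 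What your approach buys is a transparent, elementary derivation (and an explicit description of where the breakpoints of $L$ come from, namely crossings of the vertex lines); what the paper's citation buys is brevity and a pointer to a more general framework for how optimal bases vary with $\beta$. Either is acceptable as a justification of the theorem.
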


\paragraph{Parametric LPs in Graph Clustering Applications}
In our work it is significant to note that the linear programming relaxation of \lcc is a parametric linear program in $\lambda$. Furthermore, the local flow clustering objective can be cast as a parametric linear program in $\alpha$, since this objective corresponds simply to a special case of the minimum $s$-$t$ cut problem, which can be cast as an LP. 

\subsection{Related Work}
Our work builds on previous results that introduced generalized objective functions with resolution parameters, including the Hamiltonian objective~\cite{reichardt2006statistical}, clustering stability~\cite{delvenne2010stability}, a multiscale variant of the map equation~\cite{schaub2012multiscale}, and the \lcc framework~\cite{Veldt:2018:CCF:3178876.3186110}. Recently Jeub et al.~\cite{jeub2017multiresolution} introduced a technique for sampling values of a resolution parameter and applying hierarchical consensus clustering techniques. Our work on learning clustering resolution parameters differs from theirs in that we do not aim to provide hierarchical clusterings of a network. Instead we assume that there is a known fixed clustering, for which we wish to learn a single specific resolution parameter. 

There exist many techniques for localized community detection based on seed set expansion. Among numerous others, these include spectral and random-walk based methods~\cite{Spielman-2013-local,andersen2006-local}, flow-based methods~\cite{LangRao2004,AndersenLang2008,OrecchiaZhu2014,VeldtGleichMahoney2016}, and other approaches which perform diffusions from a set of seed nodes and round embeddings via a sweep cut procedure~\cite{pmlr-v70-wang17b,Kloster-2014-hkrelax}. We build on these by interpreting hyperparameters associated with such methods as resolution parameters which can be learned to produce clusters of a certain type.
\section{Theoretical Results}
The major theoretical contribution of our work is a new framework for learning clustering resolution parameters based on minimizing a parameter fitness function for a given example clustering. We present results for a generic clustering objective and fitness function, and later show how to apply our results to \lcc and local flow clustering.

\subsection{Problem Formulation}
\label{goals}
Let $\CC$ denote a set of valid clusterings for a graph $G = (V,E)$. We consider a generic clustering objective function $f_\beta: \CC \rightarrow \mathbb{R}_{\geq 0}$ that depends on a resolution parameter $\beta$. The function takes as input a clustering $C \in \CC$, and outputs a nonnegative clustering quality score for~$C$. We assume that smaller values of $f_\beta$ are better. We intentionally allow $f_\beta$ to be very general in order to develop broadly applicable theory. For intuition, one can think of $f_\beta$ as being the \lcc function~\eqref{eq:lcc} with $\beta = \lambda$. Alternatively, one can picture $f_\beta$ to be the local flow objective~\eqref{stcut} with $\beta = \alpha$ and with~$\CC$ representing the set of bipartitions, i.e.\ for any $C \in \CC$, $C = \{S, \bar{S} \}$ for some set $S \subset V$.

Given some objective function $f_\beta$, a standard clustering paradigm is to assume that an appropriate value of $\beta$ has already been chosen, and then the goal is to produce some clustering $C$ that exactly or approximately minimizes $f_\beta$. In our work, we address an inverse question: given an example clustering $C_x$, how do we determine a parameter $\beta$ such that $C_x$ approximately minimizes $f_\beta$? Ideally we would like to solve the following problem:
\begin{equation}
\label{eq:ideal}
\textbf{Goal 1: }\text{ Find $\beta > 0$ such that } f_\beta(C_x) \leq f_\beta(C) \text{ for all $C \in \CC$}.
\end{equation}
In practice, however, $C_x$ may not exactly minimize a generic clustering objective for any choice of resolution parameter. Thus we relax this to a more general and useful goal:
\begin{align}
\label{eq:better}
\textbf{Goal 2: } &\text{ Find the minimum } \Delta \geq 1 \text{ such that  for some $\beta > 0$} \notag \\
 &f_\beta(C_x) \leq \Delta f_\beta(C) \text{ for all $C \in \CC$}.
\end{align}
This second goal is motivated by the study of approximation algorithms for clustering. In effect this asks: if we are given a certain clustering $C_x$, is $C_x$ a good approximation to $f_\beta$ for any choice of $\beta$? Note that this generalizes~\eqref{eq:ideal}: if $\beta$ can be chosen to satisfy Goal 1, then the same $\beta$ will satisfy Goal 2 with $\Delta = 1$. Furthermore, it has the added advantage that, if solved, Goal 2 will produce a value $\Delta$ which communicates how well clusterings like $C_x$ can be detected using variants of the objective function $f_\beta$. If $\Delta$ is near 1, it means that~$f_\beta$ is able to produce similar clusterings for a correct choice of $\beta$, whereas if $\Delta$ is very large this indicates that $C_x$ will be difficult to find even for an optimal $\beta$, and thus a different approach will be necessary for detecting clusterings of this type.

\paragraph{Clustering Relaxations} While Goal 2 is a more reasonable target than Goal 1, it may still be a very challenging problem to solve when objective $f_\beta$ is hard to optimize, e.g., if it is NP-hard. We thus consider one final relaxation that is slightly weaker than~\eqref{eq:better}, but will be more feasible to work with. Let $\hat{\CC}$ denote a superset of $\CC$ which includes not only clusterings for $G$, but also some notion of a relaxed clustering, and let $g_\beta : \hat{\CC} \rightarrow \mathbb{R}_{\geq 0}$ be an objective that assigns a score for every $C \in \hat{\CC}$. Furthermore, assume $g_\beta$ represents a lower bound function for $f_\beta$: $g_\beta(C) \leq f_\beta(C) \text{ for all $\beta$ and all $C \in \CC$}$.
Our consideration of $g_\beta$ is motivated by the fact that many NP-hard clustering objectives permit convex relaxations, which can be optimized in polynomial time over a larger set of relaxed clusterings that contain all valid clusterings of $G$ as a subset. For example, the \lcc objective is NP-hard to optimize for every $\lambda \in (0,1)$, but the linear programming relaxation for every $\lambda$ can be solved in polynomial time, and is defined over relaxed clusterings in which pairs of nodes are assigned distances between 0 and 1. These relaxations can be rounded to produce good approximations to the original NP-hard objective~\cite{charikar2005clustering,chawla2015near}. Since $g_\beta$ is indeed easier to optimize than $f_\beta$, the following goal will be easier to approach but still provide strong guarantees for learning a good value of $\beta$:
\begin{align}
\label{eq:best}
\textbf{Goal 3: } &\text{ Find the minimum } \Delta \geq 1 \text{ such that  for some $\beta > 0$} \notag \\
&f_\beta(C_x) \leq \Delta g_\beta(C) \text{ for all $C \in \hat{\CC}$}. 
\end{align}
If we can solve~\eqref{eq:best}, this still guarantees that $C_x$ is a $\Delta$-approximation to $f_\beta$ for an appropriately chosen $\beta$. For problems where $f_\beta$ is very challenging to optimize, but $g_\beta$ is not, this will be a much more feasible approach. In the next section we will focus on developing theory for addressing Goal 3, though we note that in applying this theory we can still choose $g_\beta = f_\beta$ and therefore instead address the stronger Goal 2 whenever this is feasible. We will take this approach when applying our theory to the local flow objective.

\subsection{Parameter Fitness Function}
We now present a parameter fitness function whose minimization is equivalent to solving~\eqref{eq:best}. Functions $f_\beta$ and $g_\beta$ take a clustering or relaxed clustering as input and output an objective score. However, we wish to view $\beta$ as an input parameter and we treat an example clustering $C_x$ as a fixed input. Thus for convenience we introduce new related functions:
\begin{align}
	F(\beta) & = f_\beta(C_x) \\
	G(\beta) & = \min_{C \in \hat{\CC}} g_\beta(C)
\end{align}
The ratio of these two functions defines the \emph{parameter fitness function} that we seek to minimize:
\begin{equation}
\label{eq:pff}
\mathcal{P}(\beta) = \frac{F(\beta)}{G(\beta)}.
\end{equation}
Observe that this function is always greater than or equal to 1 since $G(\beta) \leq F(\beta)$ for any $\beta$. The minimizer of $\mathcal{P}$ is a resolution parameter $\beta$ that minimizes the ratio between the clustering score of a fixed $C_x$ and a lower bound on $f_\beta$. Thus, by minimizing~\eqref{eq:pff} we achieve Goal 3 in~\eqref{eq:best} with $\Delta = \min_\beta \mathcal{P}(\beta)$. 

In Section~\ref{sec:paraLP}, we noted that the local flow clustering objective can be characterized as a parametric linear program, as can the LP relaxation of \lcc. Furthermore, for a fixed clustering, both objective functions can be viewed as a linear function in terms of their resolution parameter. Motivated by these facts, we present a theorem which characterizes the behavior of the parameter fitness function $\mathcal{P}$ under certain reasonable conditions on the functions $F$ and $G$. In the subsequent section we will use this result to show that $\mathcal{P}$ can be minimized to within arbitrary precision using an efficient bisection-like method.
\begin{theorem}
	\label{thm:pff}
	Assume $F(\beta) = a + b \beta$ for nonzero real numbers $a$ and $b$. Let $G$ be concave and piecewise linear in $\beta$, and assume $F(\beta) \geq G(\beta) \geq 0$ for all $\beta \in [\ell,r]$ where $\ell$ and $r$ are nonnegative lower and upper (i.e.\ left and right) bounds for $\beta$. Then $\pp$ satisfies the following two properties:
	\begin{enumerate}[label=(\alph*)]
		\item If $\beta^- < \beta < \beta^+$, then $\mathcal{P}(\beta)$ cannot be strictly greater than both $\mathcal{P}(\beta^-)$ and $\mathcal{P}(\beta^+)$.
		\item If $\mathcal{P}(\beta^-) = \mathcal{P}(\beta^+)$, then $\mathcal{P}$ achieves its minimum in $[\beta^-,\beta^+]$.
	\end{enumerate}
\end{theorem}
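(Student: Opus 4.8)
My plan is to recognize part (a) as exactly the statement that $\pp$ is quasiconvex on $[\ell,r]$ and to prove it by showing that every sublevel set is an interval. Since $\pp \ge 1$ only thresholds $t \ge 1 > 0$ are relevant, and because $G(\beta) > 0$ on the interior I can clear denominators: $\pp(\beta) \le t$ is equivalent to $F(\beta) - t\,G(\beta) \le 0$. Here $F$ is affine and $tG$ is concave (as $t > 0$ and $G$ is concave), so $F - tG$ is convex in $\beta$, and its sublevel set $\{\beta : F(\beta) - t\,G(\beta) \le 0\}$ is therefore convex, i.e.\ an interval. Taking $t = \max\{\pp(\beta^-),\pp(\beta^+)\}$ then yields $\pp(\beta) \le \max\{\pp(\beta^-),\pp(\beta^+)\}$ for every $\beta \in (\beta^-,\beta^+)$, which is precisely (a).

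Part (b) does \emph{not} follow from quasiconvexity alone: a general quasiconvex function may sit on a plateau at a non-minimal value, so two equal endpoint values need not bracket the global minimum. The plan is therefore to extract sharper structure from the hypotheses. Writing $G$ on each of its linear pieces as $c + d\beta$, on that piece $\pp(\beta) = (a+b\beta)/(c+d\beta)$ with $\pp'(\beta) = (bc-ad)/(c+d\beta)^2$, so $\mathrm{sign}\,\pp'$ is constant on the piece and equals $\mathrm{sign}(bc-ad)$. The crux is to track how this sign evolves across the breakpoints of $G$. At a breakpoint $\beta_*$ joining a left piece of slope $d_1$ to a right piece of slope $d_2$, substituting $c_i = G(\beta_*) - d_i\beta_*$ gives the identity $bc_i - ad_i = b\,G(\beta_*) - d_i\,F(\beta_*)$. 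Since $G$ is concave its slopes are non-increasing ($d_1 \ge d_2$) and $F(\beta_*) > 0$, so this indicator can only jump \emph{upward} as $\beta$ increases; hence $bc - ad$ is non-decreasing across the whole interval.

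Because $\mathrm{sign}\,\pp'$ agrees with the sign of this non-decreasing indicator, $\pp$ is strictly decreasing on an initial stretch, constant on a (possibly empty) middle stretch, and strictly increasing on a final stretch; crucially, any interval on which $\pp$ is constant must sit at the global minimum. With this structure (b) follows quickly. Suppose for contradiction that no minimizer lies in $[\beta^-,\beta^+]$. By (a) the set of minimizers is an interval, so it lies entirely to the left of $\beta^-$ or entirely to the right of $\beta^+$. In the first case both $\beta^-$ and $\beta^+$ lie strictly inside the final, strictly increasing stretch, forcing $\pp(\beta^-) < \pp(\beta^+)$; in the second case both lie in the strictly decreasing stretch, forcing $\pp(\beta^-) > \pp(\beta^+)$. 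Either conclusion contradicts $\pp(\beta^-) = \pp(\beta^+)$, so a minimizer must lie in $[\beta^-,\beta^+]$.

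The main obstacle is exactly the plateau-exclusion step in (b): ruling out that $\pp$ levels off at a non-minimal height is where concavity of $G$ is indispensable, since without the monotone slopes $d_1 \ge d_2$ the indicator $bc-ad$ could change sign repeatedly and $\pp$ could develop a plateau above its minimum (the very situation quasiconvexity permits). I would also take care with degenerate points where $G(\beta)=0$, so that $\pp$ is infinite or undefined, and with the endpoints $\ell,r$ where $F$ may vanish; restricting the sign analysis to the interior where $F,G>0$ and treating boundary values separately should dispose of these edge cases without affecting the conclusion.
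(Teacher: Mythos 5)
Your proof is correct, and while part (a) is essentially the paper's argument in a different guise, part (b) takes a genuinely different route. For (a), the paper writes $\beta$ as a convex combination of $\beta^\pm$, uses linearity of $F$ and concavity of $G$, and finishes with the mediant inequality $\frac{p+q}{r+s}\le\max\{p/r,\,q/s\}$; your sublevel-set argument ($F-tG$ convex, hence $\{\beta:\pp(\beta)\le t\}$ an interval) is the standard quasiconvexity reformulation of the same fact. For (b), the paper argues that if the minimum is not attained in $[\beta^-,\beta^+]$ then $\pp$ is constant there, localizes to a linear piece of $G$ where $\pp=(a+b\beta)/(c+d\beta)$, and claims that a constant ratio with $a,b\neq 0$ forces $a=c$, $b=d$, hence $\pp\equiv 1$, which is the global minimum since $\pp\ge 1$. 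You instead track $\operatorname{sign}\pp'=\operatorname{sign}(bc-ad)=\operatorname{sign}\bigl(bG(\beta_*)-dF(\beta_*)\bigr)$ across the pieces of $G$ and use the non-increasing slopes of a concave piecewise-linear function to show this indicator is non-decreasing, so $\pp$ is strictly decreasing, then flat, then strictly increasing, with any flat stretch necessarily at the global minimum. This is more work, but it is also more robust: the paper's enumeration of when $(a+b\beta)/(c+d\beta)$ is constant omits the proportional case $(a,b)=k(c,d)$ with $k\neq 1$, in which $\pp$ plateaus at a value $k>1$; your slope-tracking step is precisely what shows such a plateau cannot sit above the minimum, so your route closes a small gap in the published argument rather than merely paraphrasing it, and it does so without even needing $a,b\neq 0$. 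Your treatment of the degenerate boundary cases is consistent with the observation that a nonnegative concave $G$ can vanish at an interior point only if it vanishes identically.
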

\begin{proof}
	Note that for some $\gamma \in (0,1)$, $\beta = (1-\gamma)\beta^+ + \gamma\beta^-$. By concavity of $G$ and linearity of $F$, we know
	\begin{align*}
	\mathcal{P}(\beta) &= \frac{F((1-\gamma)\beta^+ + \gamma\beta^-)}{G((1-\gamma)\beta^+ + \gamma\beta^-)}
								 \leq  \frac{(1-\gamma)F(\beta^+) + \gamma F(\beta^-)}{(1-\gamma)G(\beta^+) + \gamma G(\beta^-)}\\
								 &\leq \max \left\{ \frac{(1-\gamma)F(\beta^+)}{(1-\gamma)G(\beta^+)}, \frac{\gamma F(\beta^-)}{\gamma G(\beta^-)}   \right\}
								 = \max \left \{\mathcal{P}(\beta^+), \mathcal{P}(\beta^-)  \right\},
	\end{align*}
	which proves the first property. Now assume that $\mathcal{P}(\beta^-) = \mathcal{P}(\beta^+)$. Using property 1, we know as $\beta$ increases from its lower to upper limit, $\mathcal{P}$ cannot increase and then decrease. Thus, either $\mathcal{P}$ attains its minimum on $[\beta^-,\beta^+]$, else $\mathcal{P}$ is a constant for all $\beta \in  [\beta^-,\beta^+]$. If the latter is true, then for some $\beta \in  [\beta^-,\beta^+]$ and some sufficiently small $\epsilon > 0$, $G$ must be linear in the range $(\beta-\epsilon, \beta + \epsilon)$, since we know that $G$ is piecewise linear. Therefore, $G(\beta) = c+ d\beta$ and
	\begin{equation}
	\mathcal{P}(\beta) = (a+b\beta)/(c+d\beta)= constant
	\end{equation}
	for $\beta \in (\beta-\epsilon, \beta + \epsilon)$ and for some $c, d \in \mathbb{R}$. This ratio of linear functions can only be a constant if $a = c = 0$, or $b = d = 0$, or if $a = c$ and $b = d$. Since we assumed $a$ and $b$ were nonzero, the last case must hold, and thus $\mathcal{P}(\beta) = 1$ for every $\beta \in [\beta^-,\beta^+]$, so the minimizer is obtained in this case, since $\mathcal{P}(\beta) \geq 1$ for all $\beta$.
\end{proof}
In the next section we present a method for finding the minimizer of a function satisfying properties (a) and (b) in Theorem~\ref{thm:pff} to within arbitrary precision. Before doing so, we highlight the importance of ensuring that \emph{both} properties hold. In Figure~\ref{fig:pff} we plot two toy functions, $\pp$ and $\qq$. Although both satisfy property (a), only $\pp$ additionally satisfies (b). Assume we do not have explicit representations of either function, but we can query them at specific points to help find their minimizers. Consider Figure~\ref{fig:pff}. If we query $\pp$ at points $\beta_1$ and $\beta_2$ to find that $\pp(\beta_1) = \pp(\beta_2)$, then choosing any third point $\beta_3 \in (\beta_1,\beta_2)$ will get us closer to the minimizer. However, if $\qq(\beta_1) = \qq(\beta_2)$ for some $\beta_1$, $\beta_2$, we cannot be sure these points are not part of a flat region of $\qq$ somewhere far from the minimizer. It thus becomes unclear how to choose a third point $\beta_3$ at which to query $\qq$. If we choose some $\beta_3 \in (\beta_1,\beta_2)$ and find that $\qq(\beta_3) = \qq(\beta_2) = \qq(\beta_1)$, the minimizer may be within $[\beta_1,\beta_2]$, within $[\beta_2,\beta_3]$, or in a completely different region. Thus it is important for the denominator of a parameter fitness function to be piecewise linear in addition to being concave, since this piecewise linear assumption guarantees property (b) will hold.
\begin{figure}[t]
	\begin{minipage}[b]{0.4\linewidth}
		\centering
		\includegraphics[width=\linewidth]{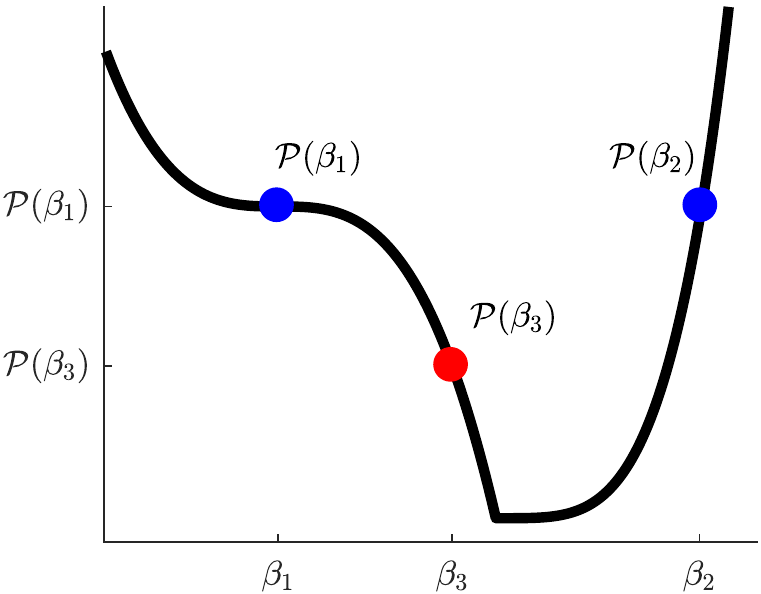}
	\end{minipage}
	\hfill
	\begin{minipage}[b]{0.4\linewidth}
		\centering
		\includegraphics[width=\linewidth]{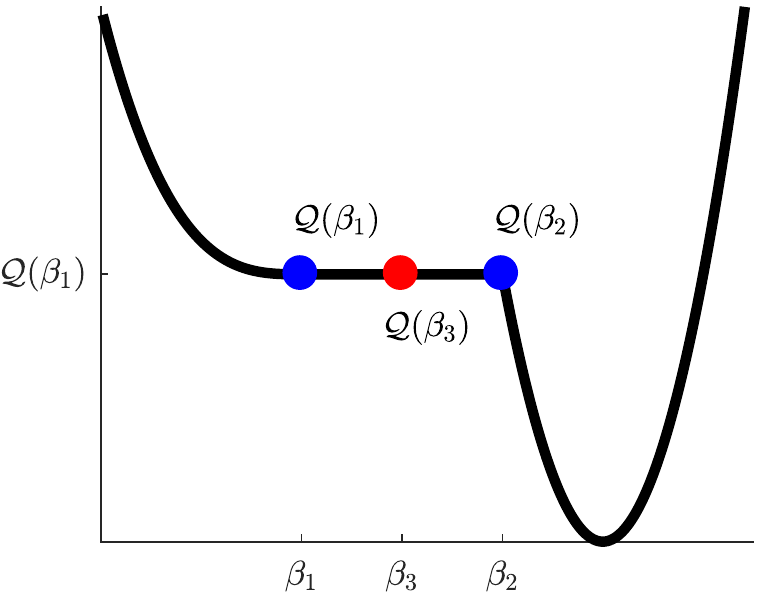}
	\end{minipage}
	\caption{Function $\pp$ satisfies both properties (a) and (b) in Theorem~\ref{thm:pff}. If $\pp(\beta_1) = \pp(\beta_2)$, querying $\pp$ at any point $\beta_3 \in [\beta_1,\beta_2]$ gets us closer to a minimizer. Function $\qq$ only satisfies property (a). If $\qq(\beta_1) = \qq(\beta_2)$, we can get stuck making queries inside a flat region of $\qq$ not near a minimizer.}
	\label{fig:pff}
	\vspace{-.5\baselineskip}
\end{figure}

\subsection{Minimizing $\pp$}
We now outline an approach for finding a minimizer of $\pp$ to within arbitrary precision when Theorem~\ref{thm:pff} holds. Our approach is closely related to the standard bisection method for finding zeros of a continuous function $f$.
Recall that standard bisection starts with $a$ and $b$ such that $sign(f(a)) \neq sign(f(b))$, and then computes $f(c)$ where $c = (a+b)/2$. Checking the sign of $f(c)$ allows one to determine whether the zero of $f$ is located within the interval $[a,c]$ or $[b,c]$. Thus each new query of the function $f$ halves the interval in which a zero must be located.

\begin{figure}[]
	\begin{minipage}[b]{0.4\linewidth}
		\centering
		\includegraphics[width=\linewidth]{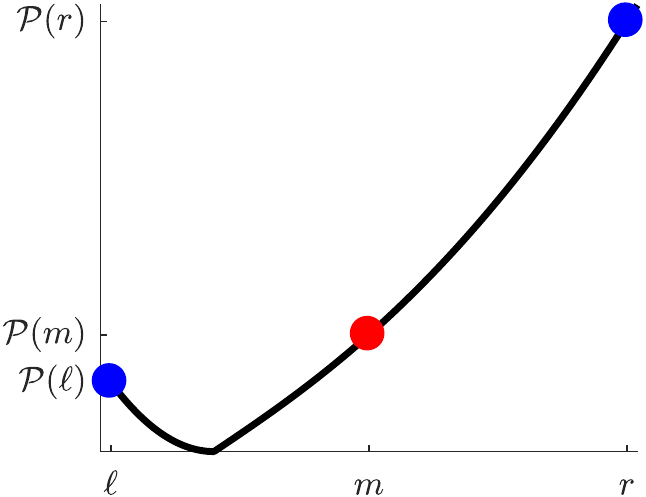}
	\end{minipage}
	\hfill
	\begin{minipage}[b]{0.4\linewidth}
		\centering
		\includegraphics[width=\linewidth]{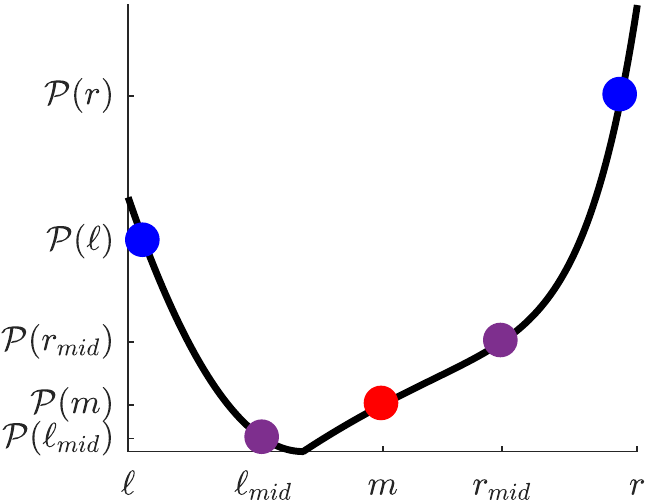}
	\end{minipage}
	\caption{We evaluate $\pp$ at left and right bounds (blue points), and at a midpoint $m$ (red point). Left: If $\mathcal{P}(\ell) < \mathcal{P}(m) < \mathcal{P}(r)$, then we know the minimizer of $\pp$ is in $[\ell,m]$, and we recursively call the \emph{one-branch} phase (Algorithm 1) with new bounds $\ell$ and $m$. 
		Right: If $\mathcal{P}(m) < \mathcal{P}(\ell) < \mathcal{P}(r)$, we don't know if the minimizer is in the left branch $[\ell,m]$ or right branch $[m,r]$. Evaluating $\pp$ at the midpoint of each branch (purple points), we rule out branch $[m,r]$ and recursively call Algorithm 2 with new endpoints $\ell$ and $m$ and midpoint $\ell_{\mathit{mid}}$.
	}
	\label{fig:algs}
			\vspace{-.5\baselineskip}
\end{figure}
Assume $\pp$ satisfies properties (a) and (b) in Theorem~\ref{thm:pff} over an interval $[\ell, r]$.
To satisfy Goal 3, given in~\eqref{eq:best} in Section~\ref{goals}, it suffices to find any minimizer of $\pp$, which we do by repeatedly halving the interval in which the minimizers of $\pp$ must lie.
Our approach differs from standard bisection in that we are trying to find a \emph{minimizer} instead of the zero of some function. The key algorithmic difference is that querying $\pp$ at a single point between two bounds will not always be sufficient to cut the search space in half. Consider Figure~\ref{fig:algs}. Our method starts in a \emph{one-branch} phase in which we know a minimizer lies between $\ell$ and $r$. If we compute $m = (\ell + r)/2$ and find that $\pp(m)$ is between $\pp(\ell)$ and $\pp(r)$, this does in fact automatically cut our search space in half, as this implies that $\pp$ is monotonic on either $[\ell,m]$ or $[m,r]$. However, if $\pp(m) < \min \{\pp(\ell),\pp(r) \}$, then it is possible for the minimizer to reside within either the left branch $[\ell,m]$ or the right branch $[m,r]$. In this case, the method enters a \emph{two-branch} phase in which it takes the midpoint of each branch ($\ell_{\mathit{mid}} = (\ell+m)/2$ and $r_{\mathit{mid}} = (m+r)/2$) and evaluates $\pp(\ell_{\mathit{mid}})$ and $\pp(r_{\mathit{mid}})$. If $\pp$ returns the same value for two of the inputs (e.g., $\pp(\ell) = \pp(m)$), then by property (b) we have found a new interval containing the minimizer(s) of $\pp$ that is at most half the length of $[\ell,r]$. Otherwise, we can use property (a) to deduce that the minimizer will be located within $[\ell, m]$, $[m,r]$, or $[\ell_{\mathit{mid}}, r_{\mathit{mid}}]$, and we recurse on the two-branch phase.

Algorithms~\ref{alg:OneBranch} and~\ref{alg:TwoBranch} handle the one- and two-branch phases of the method respectively. The guarantees of our method are summarized in Theorem~\ref{thm:algs}. We omit the full proof, since it follows directly from considering different simple cases and applying properties of $\pp$ to halve the search space as outline above.
\begin{theorem}
	\label{thm:algs}
	Consider a fixed clustering $C_x$ and a corresponding parameter fitness function $\pp_{C_x}$ satisfying the assumptions of Theorem~\ref{thm:pff}. Running Algorithm~\ref{alg:OneBranch} with input $\ell, r$ and a tolerance $\epsilon$ will produce a resolution parameter $\tilde{\beta}$ that is within $\epsilon$ of the minimizer of $\pp_X$ over the interval $[\ell, r]$, in at most $\log_2 ((r-\ell)/\epsilon)$ recursive calls.
\end{theorem}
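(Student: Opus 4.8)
The plan is to run an induction on the recursion depth of Algorithms~\ref{alg:OneBranch} and~\ref{alg:TwoBranch}, maintaining the invariant that the current working interval always contains a minimizer of $\pp$ and that its length is halved at every recursive call. First I would recast the two properties of Theorem~\ref{thm:pff} in a more usable form: property (a) is exactly the statement that $\pp$ is quasiconvex on $[\ell,r]$, so $\pp$ is \emph{unimodal} --- non-increasing up to a contiguous set of minimizers and non-decreasing thereafter --- while property (b) is what rescues us from the flat-region pathology illustrated by $\qq$ in Figure~\ref{fig:pff}, since it guarantees that whenever two sampled points share a value, a minimizer is trapped strictly between them. Together these give the single fact I would lean on throughout: for equally spaced samples, a minimizer must lie within one grid step of the smallest sampled value, with any tie resolved in favor of the subinterval between the tied points.

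Next I would verify the invariant is preserved in each phase. In the one-branch phase on $[\ell,r]$ we query $m=(\ell+r)/2$. Property (a) immediately rules out $\pp(m)>\max\{\pp(\ell),\pp(r)\}$, so either $\pp(m)$ lies weakly between the two endpoint values --- in which case unimodality, or property (b) at a tie, places a minimizer in the appropriate half $[\ell,m]$ or $[m,r]$, halving the interval --- or $\pp(m)<\min\{\pp(\ell),\pp(r)\}$, in which case a minimizer is bracketed strictly inside and we hand off to the two-branch phase. In the two-branch phase we additionally query $\ell_{\mathit{mid}}=(\ell+m)/2$ and $r_{\mathit{mid}}=(m+r)/2$; among the five equally spaced points the smallest sample is one of $\ell_{\mathit{mid}}$, $m$, $r_{\mathit{mid}}$, and by the grid-step fact a minimizer lies in the corresponding half-length subinterval $[\ell,m]$, $[\ell_{\mathit{mid}},r_{\mathit{mid}}]$, or $[m,r]$. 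In every branch the new interval has length exactly $(r-\ell)/2$ and still contains a minimizer, so the invariant holds; note also that the one-branch-to-two-branch handoff recurses on a half-length interval whose midpoint has already been evaluated, so it consumes only a single level of the recursion.

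Finally I would assemble the bound. Since each recursive call halves the interval, after $k$ calls the working interval has length at most $(r-\ell)/2^k$; taking $k=\lceil\log_2((r-\ell)/\epsilon)\rceil$ forces this length below $\epsilon$. Because the working interval always contains a minimizer $\beta^\ast$, returning any point $\tilde\beta$ of that final interval gives $|\tilde\beta-\beta^\ast|\le\epsilon$, which is exactly the claimed accuracy within the claimed number of recursive calls.

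I expect the main obstacle to be the exhaustive case analysis of the two-branch phase, specifically checking that every tie configuration among the five sampled values is covered and that we never discard the subinterval actually containing a minimizer. This is precisely where property (b) does the essential work: quasiconvexity alone (property (a)) would leave the ambiguous flat-region cases of $\qq$ unresolved, so the delicate part is confirming that each equality among sampled values triggers property (b) on a pair of points straddling a half-length subinterval.
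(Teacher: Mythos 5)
Your proposal is correct and follows exactly the argument the paper sketches: the paper in fact omits the full proof, appealing to precisely the case analysis and interval-halving invariant you describe, and your reading of property (a) as quasiconvexity/unimodality and of property (b) as the tie-breaker that traps a minimizer between equal-valued samples is the right way to organize the cases. One small inaccuracy worth fixing: the handoff from Algorithm~\ref{alg:OneBranch} to Algorithm~\ref{alg:TwoBranch} is the call CheckTwoBranches$(\ell,m,r)$, which receives the \emph{full} interval $[\ell,r]$ together with its already-evaluated midpoint --- at that transition the uncertainty interval is not halved (the minimizer could still be anywhere in $(\ell,r)$); only the subsequent two-branch steps shrink to one of $[\ell,m]$, $[m,r]$, or $[\ell_{\mathit{mid}},r_{\mathit{mid}}]$, each of length $(r-\ell)/2$. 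So the invariant ``every recursive call halves the interval'' fails at exactly that one step, costing an additive $+1$ over the stated $\log_2((r-\ell)/\epsilon)$ bound in the worst case; this is a bookkeeping matter (one can charge the handoff and the first two-branch call to a single level, since together they halve the interval using the extra queries at $\ell_{\mathit{mid}},r_{\mathit{mid}}$), not a gap in the argument.
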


\begin{algorithm}
	\caption{CheckOneBranch$(\ell,r,\epsilon)$}
	\begin{algorithmic}[5]
		\State \emph{Base case:}
		\If{$r - \ell < \epsilon$ }
		\State \Return $\ell$
		\EndIf
		\State \emph{Recursive call:}
		\State Midpoint: $m = (\ell+r)/2$
		\Switch{$\ell, m, r$}
		\Case{$\pp(\ell) = \pp(m) = \pp(r)$}
		\State \Return $m$
		\EndCase
		\Case{$\pp(\ell) \leq \pp(m) < \pp(r)$}
		\State \Return CheckOneBranch$(\ell, m, \epsilon)$
		\EndCase
		\Case{$\pp(\ell) > \pp(m) \geq \pp(r)$}
		\State \Return CheckOneBranch$(m, r, \epsilon)$
		\EndCase
		\Case{$\pp(\ell) > \pp(m) < \pp(r)$}
		\State \Return CheckTwoBranches$(\ell,m, r, \epsilon)$
		\EndCase
		\EndSwitch
	\end{algorithmic}
	\label{alg:OneBranch}
\end{algorithm}

\begin{algorithm}[t]
	\caption{CheckTwoBranches$(\ell,m,r,\epsilon)$}
	\begin{algorithmic}[5]
		\State \emph{Base case:}
		\If{$r - \ell < \epsilon$ }
		\State \Return $m$
		\EndIf
		\State \emph{Recursive call:}
		\State Left midpoint: $\ell_{\mathit{mid}} = (\ell+m)/2$
		\State Right midpoint: $r_{\mathit{mid}} = (m+r)/2$
		\Switch{$\ell_{\mathit{mid}}, m, r_{\mathit{mid}}$}
		\Case{$\pp(\ell_{\mathit{mid}}) = \pp(m) = \pp(r_{\mathit{mid}})$}
		\State \Return $m$ 
		\EndCase
		\Case{$\pp(\ell_{\mathit{mid}}) = \pp(m) \neq \pp(r_{\mathit{mid}})$}
		\State \Return CheckOneBranch$(\ell_{\mathit{mid}}, m, \epsilon)$
		\EndCase
		\Case{$\pp(\ell_{\mathit{mid}}) \neq \pp(m) = \pp(r_{\mathit{mid}})$}
		\State \Return CheckOneBranch$(m, r_{\mathit{mid}},\epsilon)$
		\EndCase
		\Case{$\pp(\ell_{\mathit{mid}}) < \pp(m) < \pp(r_{\mathit{mid}})$}
		\State \Return CheckTwoBranches$(\ell,\ell_{\mathit{mid}},m, \epsilon)$
		\EndCase
		\Case{$\pp(\ell_{\mathit{mid}}) > \pp(m) > \pp(r_{\mathit{mid}})$}
		\State \Return CheckTwoBranches$(m, r_{\mathit{mid}},r, \epsilon)$
		\EndCase
		\Case{$\pp(\ell_{\mathit{mid}}) > \pp(m) < \pp(r_{\mathit{mid}})$}
		\State \Return CheckTwoBranches$(\ell_{\mathit{mid}},m, r_{\mathit{mid}}, \epsilon)$
		\EndCase
		\EndSwitch
	\end{algorithmic}
	\label{alg:TwoBranch}
\end{algorithm}

\section{Application to Specific Objectives}
Theorem~\ref{thm:pff} and our approach for minimizing $\pp$ can be immediately applied to learn resolution parameters for the \lcc global clustering objective and the local flow clustering objective. 

\subsection{Local Clustering}
For local clustering we consider the objective function $f_\alpha$ given in~\eqref{stcut} and note that the set of valid clusterings $\mathcal{C}$ is the set of bipartitions. The example clustering we are given at the outset of the problem is $C_x = \{ X, \bar{X} \}$ where $X \subset V$ is some nontrivial set of nodes representing a ``good'' cluster for a given application. We assume we are also given a reference set $R$ (with $\vol(R) \leq \vol(\bar{R})$) that defines a region of the graph in which we are searching for clusters. As noted previously, $f_\alpha$ can be viewed as a parametric linear program, and furthermore it will evaluate to a non-negative finite number for any $\alpha > 0$. Thus by Theorem~\ref{thm:concave}, $G(\alpha) = \min_{S} f_\alpha(S)$ is concave and piecewise linear and we can apply Theorem~\ref{thm:pff}. More explicitly, the local clustering parameter fitness function is
\begin{equation}
\label{localpff}
\pp_X(\alpha) = \frac{\cut(X) + \alpha \vol(\bar{X} \cap R) + \alpha \varepsilon \vol(X \cap \bar{R})}{\min_S [\cut(S) + \alpha \vol(\bar{S} \cap R) + \alpha \varepsilon \vol(S \cap \bar{R})]}.
\end{equation}
If we focus on finding clusters that are subsets of $R$, using objective~\eqref{mqiobj}, we have a simplified fitness function:
\begin{equation}
\label{mqipff}
\pp_{X}(\alpha) = \frac{\cut(X) - \alpha \vol(X) + \alpha \vol(R)}{\min_{S\subseteq R} [\cut(S) - \alpha \vol(S) + \alpha \vol(R)]}.
\end{equation}
When we apply Algorithm~\ref{alg:OneBranch} to minimize~\eqref{localpff} or~\eqref{mqipff}, we can query $\pp_X$ in the time it takes to evaluate a linear function and the time it takes to solve the $s$-$t$ cut problem~\eqref{stcut}. This can be done extremely quickly using localized min-cut computations~\cite{LangRao2004,OrecchiaZhu2014,VeldtGleichMahoney2016,Veldt2019flow}. 


Functions~\eqref{localpff} and~\eqref{mqipff} should be minimized over $\alpha \in [\alpha^*, \cut(R)]$, where $\alpha^*$ is either the minimum of~\eqref{localcond} if we are minimizing~\eqref{localpff}, or is the minimum conductance for a subset of $R$ if we are minimizing~\eqref{mqipff}. One can show that for any $\alpha$ outside this range, objectives~\eqref{stcut} and~\eqref{mqiobj} will be trivially minimized by $S = R$, so it is not meaningful to optimize these objectives for these $\alpha$. In practice one can additionally set stricter upper and lower bounds if desired.

%
 
\subsection{Global Clustering Approach}
We separately consider the standard and degree-weighted versions of \lcc when applying Theorem~\ref{thm:pff} to global graph clustering.

\paragraph{Standard \lcc}
For the standard objective, it is useful to consider the scaled version of \lcc obtained by dividing~\eqref{eq:lcc} by $1-\lambda$ and substituting for a new resolution parameter $\gamma = \lambda/(1-\lambda)$. Then the objective is 
\begin{equation}
\label{eq:gcc}
\min \,\, {\textstyle \sum_{(u,v) \in E} (1-\delta_{uv}) + \sum_{(u,v) \notin E} \gamma \delta_{uv}}.
\end{equation}
The denominator of the parameter fitness function for this scaled $\lcc$ problem would be 
\begin{equation}
\label{gammaCC}
{\textstyle G(\gamma) = \min_{\vx \in \mathcal{X} } \,\, \sum_{(u,v) \in E} x_{uv} + \sum_{(u,v) \notin E}\gamma (1-x_{uv})}
\end{equation}
where $\mathcal{X}$ represents the set of linear constraints for the linear program~\eqref{eq:lccLP}. Note that $G(\gamma)$ will be finite for every $\gamma \geq 0$, so Theorem~\ref{thm:concave} holds. Thus $G$ is concave and piecewise linear as required by Theorem~\ref{thm:pff}. Next, for a fixed clustering $C_x$, let $P_x$ be the number of positive mistakes (pairs of nodes that are separated despite sharing an edge) and $N_x$ be the number of negative mistakes (pairs of nodes that are clustered together but share no edge). Then objective~\eqref{eq:gcc} for this clustering is $P_x + \gamma N_x$, and we see that this fits the linear form given in Theorem~\ref{thm:pff} as long as the example clustering satisfies $P_x > 0$ and $N_x > 0$, which will be the case for nearly any nontrivial clustering one might consider. Finally, note that the parameter fitness function for~\eqref{eq:gcc} would be exactly the same as the parameter fitness function for the standard \lcc objective, since scaling by $(1-\lambda)$ makes no difference if we are going to minimize the ratio between the clustering objective and its LP relaxation. The parameter fitness function for standard \lcc is therefore
\begin{equation}
\label{globalpff}
\pp_{C_x}(\lambda) = \frac{(1-\lambda)P_x + \lambda N_x}{\min_{\vx} \left[ \sum_{uv\in E} (1-\lambda) x_{uv} + \sum_{uv \notin E}\lambda (1-x_{uv})\right] }
\end{equation}
and it satisfies the assumptions of Theorem~\ref{thm:pff} as long as $P_x > 0$, $N_x > 0$, and we optimize over $\lambda \in (0,1)$.

\paragraph{Degree-weighted \lcc}
Showing how Theorem~\ref{thm:pff} applies to degree-weighted \lcc requires slightly more work, though the same basic principles hold. The LP-relaxation of the objective is still a parametric linear program, thus is still concave and piecewise linear in $\lambda$ over the interval $(0,1)$. The denominator of the parameter fitness function in this case would be:
\begin{equation}
\label{eq:deglcc}
{\textstyle \min \,\, \sum_{(u,v) \in E^+} e_{uv}(1-\delta_{uv}) + \sum_{(u,v) \in E^-} e_{uv} \delta_{uv}.}
\end{equation}
where $e_{uv}$ is defined in the degree-weighted fashion (see Section~\ref{sec:lamcc}). For a fixed example clustering $C_x$ encoded by a function $\delta_x = (\delta_{uv})$, we can rearrange this into the form $a + \lambda b$ where $a = \sum_{(u,v) \in E} (1-\delta_{uv})$ and $b =  \sum_{(u,v) \notin E} d_ud_v \delta_{uv} -  \sum_{(u,v) \in E} d_ud_v(1-\delta_{uv})$.
These values are simple to compute, and as long as they are both nonzero, the results of Theorem~\ref{thm:pff} apply. In some extreme cases it is possible that $a = 0$ or $b = 0$, but we expect this to be rare. Furthermore, our general approach may still work even when $a = 0$ or $b = 0$, Theorem~\ref{thm:pff} simply does not analyze this case. We leave it as future work to develop more refined sufficient and necessary conditions such that Algorithm~\ref{alg:OneBranch} is guaranteed to minimize $\pp$.

\section{Experiments}
We consider several local and global clustering experiments in which significant benefit can be gained from learning resolution parameters rather than using previous off-the-shelf algorithms and objective functions. We implement Algorithms~\ref{alg:OneBranch} and~\ref{alg:TwoBranch} in the Julia programming language for both local and global parameter fitness functions. Computing the \lcc linear programming relaxation can be challenging due to the size of the constraint set. For our smaller graphs we apply Gurobi optimization software, and for larger problems we use recently developed memory-efficient projection methods~\cite{veldt2018projection,ruggles2019projection}. For the local-flow objective we use a fast Julia implementation we developed in recent work~\cite{Veldt2019flow}.  Our experiments were run on a machine with two Intel Xeon E5-2690 v4 processors. Code for our experiments and algorithms are available at~\url{https://github/nveldt/LearnResParams}.

\subsection{Learning Parameters for Synthetic Datasets}
\begin{figure}[t!]
	\begin{minipage}[b]{0.53\linewidth}
		\centering
		\includegraphics[width=\linewidth]{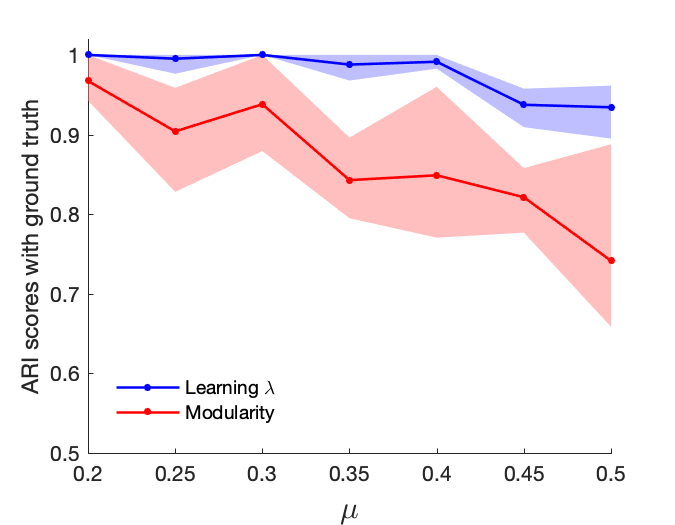}
	\end{minipage}
	\hfill
	\begin{minipage}[b]{0.4\linewidth}
		\centering
		\includegraphics[width=\linewidth]{Figures/mod.pdf}
	\end{minipage}
	\caption{Left: ARI scores for detecting ground truth in LFR graphs. Solid lines indicate mean scores, and colored regions show the range of scores across 5 test graphs for each $\mu$. Right: one of the 5 LFR test graphs for $\mu = 0.3$. Modularity ($\lambda = 1/(2|E|)$) makes mistakes by putting distinct ground truth clusters together (highlighted). For this example our approach perfectly detects the ground truth.}
	\label{fig:lfr}
\end{figure}
Although modularity is a widely-applied objective function for community detection, Fortunato and Barth{\'e}lemy~\cite{Fortunato36} demonstrated that it is unable to accurately detect communities below a certain size threshold in a graph. In our first experiment we demonstrate that learning resolution parameters for \lcc allows us to overcome the \emph{resolution limit} of modularity, and better detect community structure in synthetic networks. We generate a large number of synthetic LFR benchmark graphs~\cite{LFR2009}, in a parameter regime that is chosen to be difficult for modularity. All graphs contain 200 nodes, average degree 10, max degree 20, and community sizes between 5 and 20 nodes. We test a range of mixing parameters $\mu$, which controls the fraction of edges that connect nodes in different communities ($\mu = 0$ means all edges are inside the communities).

For each $\mu$ from $0.2$ to $0.5$, in increments of $0.05$, we generate six LFR networks, one for training and five for testing. On the training graph, we minimize the degree-weighted \lcc parameter fitness function to learn a resolution parameter $\lambda_{\mathit{best}}$. This takes between roughly half an hour (for $\mu = 0.2$) to just over three hours (for $\mu = 0.5$), solving the underlying \lcc LP with Gurobi software. We then cluster the five test LFR examples using a generalized version of the Louvain method~\cite{blondel2008louvain}, as implemented by Jeub et al. ~\cite{genLouvain_software}. We separately run the method with two resolution parameters: $\lambda = 1/(2|E|)$, the standard setting for modularity, and $\lambda = \lambda_\mathit{best}$. Learning $\lambda_\mathit{best}$ significantly improves adjusted Rand index (ARI) scores for detecting the ground truth (see Figure~\ref{fig:lfr}).

\subsection{Local Community Detection}
Next we demonstrate that a small amount of semi-supervised information about target communities in real-world networks can allow us to learn good resolution parameters, leading to more robust community identification. Additionally, minimizing the parameter fitness function provides a way to measure the extent to which \emph{functional} communities in a network correspond to topological notions of community structure in networks.

\paragraph{Data}
We consider four undirected networks, DBLP, Amazon, Orkut, LiveJournal, which are all available on the SNAP repository~\cite{snapnets}, and come with sets of nodes that can be identified as ``functional communities'' (see Yang and Leskovec~\cite{Yang2015}). For example, members of the social network Orkut may explicitly identify as being part of a user-formed group. Such user groups can be viewed simply as metadata about the network, though these still correspond to some notion of community organization that may be desirable to detect. Following an approach taken in previous work~\cite{Veldt2019flow}, we specifically consider the ten largest communities from the 5000 best functional communities as identified by Yang and Leskovec~\cite{Yang2015}. 
The size of each graph in terms of nodes ($n$) and edges $(m)$, along with average set size $|T|$ and conductance $\phi(T)$ among the largest 10 communities, are given in Table~\ref{tab:stats}.

\begin{table}[t]
	\caption{We list the number of nodes ($n$) and edges ($m$) in each snap network, along with average set size $|T|$ and set conductance $\phi(T)$ for the ten largest communities.}
	\label{tab:stats}
	\centering
	\begin{tabular}{lllll}
		\toprule
		Graph & $n$ & $m$  & $|T|$ & $\phi(T)$  \\
		\midrule 
			DBLP & 317,080 & 1,049,866 & 3902 & 0.4948 \\
			Amazon & 334,863 & 925,872 & 190 & 0.0289 \\
			LiveJournal & 3,997,962 & 34,681,189 & 988 & 0.4469 \\
			Orkut & 3,072,441 & 117,185,083 & 3877 & 0.6512 \\
		\bottomrule
	\end{tabular}
\end{table}

\paragraph{Experimental Setup and Results} We treat each functional community as an example cluster $X$. We build a superset of nodes $R$ by growing $X$ from a breadth first search until we have a superset of size $5|X|$, breaking ties arbitrarily. The size of $R$ is chosen so that it comprises a localized region of a large graph, but is still significantly larger than the target cluster $X$ hidden inside of it. We compare two approaches for detecting $X$ within $R$. As a baseline approach we extract the best conductance subset of $R$. Then as our new approach we assume we are given $\cut(X)$ and $\vol(X)$ as additional semi-supervised information. This allows us to minimize the parameter fitness function~\eqref{mqipff}, without knowing what $X$ is. This outputs a resolution parameter $\alpha_X$, and we then minimize $\cut(S) - \alpha_X \vol(S) + \alpha \vol(R)$ over $S \subseteq R$ to output a set $S_X$. 

		\begin{table*}[]
			\caption{
				For experiments on SNAP datasets, we give F1 scores, conductance scores $\phi$, runtimes, and output set sizes for finding the minimum conductance subset ($mc$), and for the set returned by learning a good resolution parameter ($lr$). Display is the average over results for the 10 largest communities in each network.
			}
			\label{tab:snap}
			\centering
			\begin{tabular}{lllllllllllll}
				\toprule
				Graph  & F1 &  & $\phi$ & & run. & & size & \\
				 & $mc$ & $lr$ & $mc$ & $lr$ & $mc$ & $lr$ & $mc$ & $lr$ \\
				\midrule 
DBLP & 0.01 & \textbf{0.47} & 0.02 & 0.16 & 4.9 & 11.4 & 31 & 11680 \\
Amazon & 0.73 & \textbf{0.80} & 0.00 & 0.02 & 0.3 & 0.7 & 142 & 288 \\
LiveJournal & 0.30& \textbf{0.54} & 0.06 & 0.10 & 13.7 & 31.3 & 1556 & 2940 \\
Orkut & 0.44 & \textbf{0.62} & 0.42 & 0.46 & 129.8 & 272.6 & 2353 & 5727 \\
				\bottomrule
			\end{tabular}
		\end{table*}
Table~\ref{tab:snap} reports conductance, set size, runtimes, and F1 scores for both approaches, averaged over the ten communities in each network. Learning resolution parameters leads to significantly better F1 scores on every dataset. 
Additionally, learning resolution parameters for local clustering can be done much more quickly than learning $\lambda$ for \lcc.


\paragraph{New Insights} In addition to improving semi-supervised community detection, minimizing $\pp_X$ allows us to measure how well a functional community matches the topological notion of a cluster. Figure~\ref{fig:snap} shows a scatter plot of F1 community recovery scores against the minimum of $\pp_X$ for each experiment from Table~\ref{tab:snap}. We note a downward sloping trend: small values of $\pp_X$ near 1 tend to indicate that a cluster is highly ``detectable,'' whereas a higher value of $\pp_X$ gives some indication that the functional community may not in fact correspond to a good structural community. We also plot the F1 recovery scores for finding the minimum conductance subset of $R$ against the conductance of functional communities. In this case we do not see any clear pattern, and we learn very little about the relationship between structural and functional communities. 
\begin{figure}[t]
	\begin{minipage}[b]{0.4\linewidth}
		\centering
		\includegraphics[width=\linewidth]{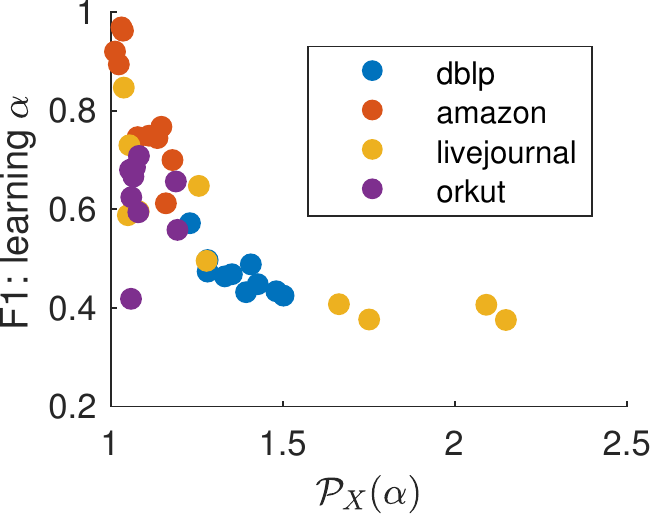}
	\end{minipage}
	\hfill
	\begin{minipage}[b]{0.4\linewidth}
		\centering
		\includegraphics[width=\linewidth]{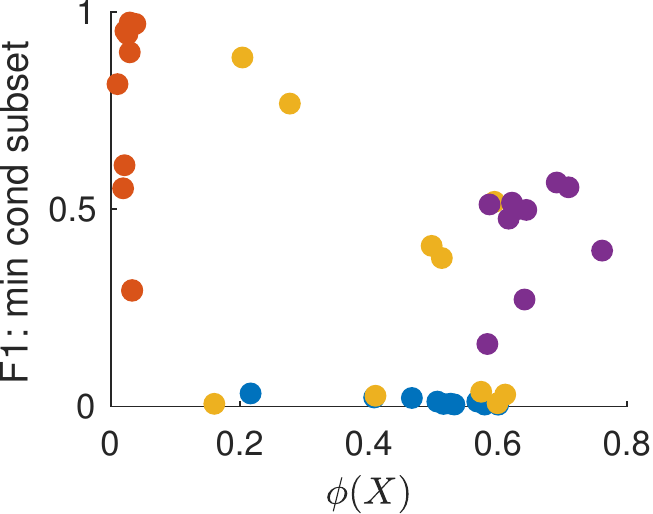}
	\end{minipage}
		\caption{Left: F1 scores for detecting clusters $X$ by learning $\alpha$ vs.\ the minimum of $\pp_X$. Right: F1 scores obtained by finding $\min_{S \subseteq R} \phi(S)$ vs.\ $\phi(X)$. The decreasing, nearly linear pattern in the first plot indicates that the minimum value of $\pp_X$ tells us something about how well the targeted functional communities match a notion of structural communities in a network. The right plot indicates $\phi(X)$ does little to help us predict how detectable a cluster will be.
			}
		\label{fig:snap}
\end{figure}

\subsection{Meta-Data and Global Clustering}
Next we use our techniques to measure how strongly metadata attributes in a network are associated with actual community structure. In general, sets of nodes sharing metadata attributes should not be viewed as ``ground truth'' clusters~\cite{Peel2017ground}, although they may still shed light on the underlying clustering structure of a network.

\paragraph{Email Network} We first consider the largest connected component of an email network~\cite{Leskovec2007graph,Yin2017}. Each of the 986 nodes in the graph represents a faculty member at a European university, and edges represent email correspondence between members. We remove edge weights and directions, and consider an example clustering $C_x$ formed by assigning faculty in the same academic department to the same cluster. We use our bisection method to approximately minimize the global parameter fitness function for the degree-weighted \lcc objective. We run our method until we find the best resolution parameter to within a tolerance of $10^{-8}$, yielding a resolution parameter $\lambda_x = 6.5 \times 10^{-5}$ and a fitness score of $\pp_{C_x}(\lambda_x) = 1.34$. 

To assess how good or bad a score of $1.34$ is for this particular application, we construct a new fake metadata attribute by performing a random permutation of the department labels, which gives a clustering $C_\mathit{fake}$. Approximately minimizing $\pp_{C_\mathit{fake}}$ yields a resolution parameter $\lambda_\mathit{fake} = 3.25\times 10^{-5}$ and a score $\pp_{C_\mathit{fake}}(\lambda_\mathit{fake}) = 2.16$. The gap between the minima of $\pp_{C_\mathit{fake}}$ and $\pp_{C_x}$ indicates that although the true metadata partitioning does not perfectly map to clustering structure in the network, it nevertheless shares some meaningful correlation with the network's connectivity patterns. To further demonstrate this, we run the generalized Louvain algorithm~\cite{blondel2008louvain,genLouvain_software}, using the resolution parameters $\lambda_x$ and $\lambda_\mathit{fake}$. Running the clustering heuristic with $\lambda_x$ outputs a clustering that has a normalized mutual information score (NMI) of 0.71 and an adjusted Rand index (ARI) score of 0.55 with $C_x$. Using $\lambda_\mathit{fake}$, we get NMI and ARI scores of only 0.05 and 0.003 respectively when comparing with $C_\mathit{fake}$. 

\paragraph{Social Networks}
We repeat the above experiment on the smallest social network in the Facebook 100 datasets~\cite{traud2012facebook}, Caltech36. This network is a subset of Facebook with $n = 769$ nodes, defined by users at the California Institute of Technology at a certain point in September 2005. Every node in the network comes with anonymized metadata attributes reporting student/faculty status, gender, major, second major, residence, graduation year, and high school. We treat each metadata attribute as an example clustering $C_x$. Any node with a value of 0 for an attribute we treat as its own cluster, as this indicates the node has no label for the given attribute. We do not run Algorithm~\ref{alg:OneBranch} for each individual $C_x$, since this would involve redundant computations of the \lcc LP relaxations for many of the same values of $\lambda$. Instead, we evaluate the denominator of $\pp$, which is the same for all example clusterings, at 20 equally spaced $\lambda$ values between $1/(8|E|)$ and $2/(|E|)$. We set values of $\lambda$ to be inversely proportional to the number of edges, since we expect the effect of a resolution parameter to depend on a network's size. We note for example that the resolution parameter corresponding to modularity is $\lambda = 1/(2|E|)$, which is also inversely proportional to $|E|$. Computing all of the LP bounds is the bottleneck in our computations, and takes just under 2.5 hours using a recently developed parallel solver for the correlation clustering relaxation~\cite{ruggles2019projection}.

Having evaluated the denominator of $\pp$ at these values, we can quickly find the minimizer of $\pp$ for each metadata attribute and a permuted fake metadata attribute to within an error of less than $10^{-5}$. The smallest values of the parameter fitness function $\pp$ for both real and permuted (fake) metadata attributes are given below:
\[
\begin{tabular}{c c c c c c c c}
& S/F & Gen & Maj. & Maj. 2 & Res. & Yr & HS \\
\midrule
$\min \pp_\mathit{real} $& 1.30 & 1.73 & 2.03 & 2.12 &	1.35 & 1.57 & 2.11 \\
$\min \pp_\mathit{fake} $ & 1.65 & 1.80 & 2.12 & 2.12 & 2.11 & 2.09 & 2.12\\
\bottomrule
\end{tabular}
\]
We note that the smallest values of $\pp$, as well as the largest gap between $\pp$ for true and fake metadata clusterings, are obtained for the student/faculty status, residence, and graduation year attributes. This indicates that these attributes share the strongest correlation with the community structure at this university, which is consistent with independent results on the Facebook 100 datasets~\cite{traud2012facebook,Veldt:2018:CCF:3178876.3186110}.

\subsection{Local Clustering in Social Networks}
In our final experiment we continue exploring the relationship between metadata and community structure in Facebook 100 datasets. We find that minimizing a \emph{local} parameter fitness function $\pp$ can be a much better way to measure the community structure of a set of nodes than simply considering the set's conductance. 


\paragraph{Data} 
We perform experiments on all Facebook 100 networks, focusing on the student/faculty status, gender, residence, and graduation year metadata attributes. For the Caltech dataset in the last experiment, these attained the lowest scores for a \emph{global} parameter fitness function, and furthermore these are the only attributes with a significant number of sets with nontrivial conductance. For the graduation year attribute, we focus on classes between 2006 to 2009, since these correspond to the four primary classes on each campus when the networks were crawled in September of 2005~\cite{traud2012facebook}.

\paragraph{Experimental Setup} We return to an approach similar to our first experiment. For each network and metadata attribute, we consider sets of nodes identified by the same metadata label, e.g., $X$ may represent all students in the class of 2008 at the University of Chicago. We will refer to these simply as \emph{metadata sets}. A label of zero indicates no attribute is known, so we ignore these sets. We also discard sets that are larger than half the graph, or smaller than 20 nodes. We restrict to considering metadata sets with conductance at most 0.7, since conductance scores too close to 1 indicate that a set has little to no meaningful connectivity pattern. For each remaining metadata set $X$, we grow a superset $R$ around $X$ using a breadth first search, and stop growing when $R$ contains half the nodes in the graph or is three times the size of $X$. We then minimize $\pp_X$ as given by~\eqref{mqipff} to learn a resolution parameter $\alpha_X$. This allows us to find $S_X = \argmin_{S \subseteq R} \cut(S) - \alpha_X \vol(S)$, and we then compute the F1 score between $S_X$ and $X$. Our goal here is not to develop a new method for community detection. Rather, computing the F1 score and the minimum of $\pp_X$ provide ways to measure how well a metadata set conforms to a topological notion of community structure, and how detectable the set is from an algorithmic perspective. 

\begin{figure}[t]
	\centering
	\subfloat[$\pp_X$: Gender, S/F, Residence\label{fig:fb1}]
	{\includegraphics[width=.4\linewidth]{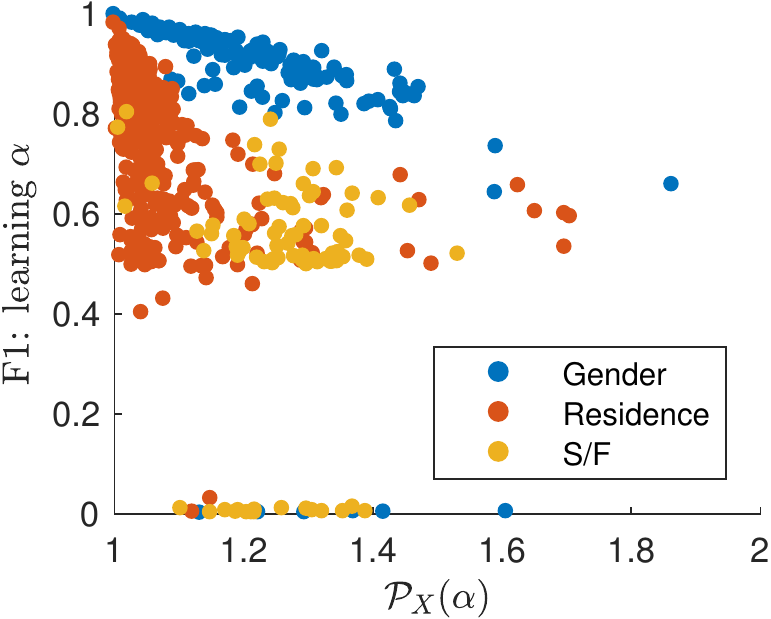}}\hfill
	\subfloat[$\phi(X)$: Gender, S/F, Residence \label{fig:fb2}]
	{\includegraphics[width=.4\linewidth]{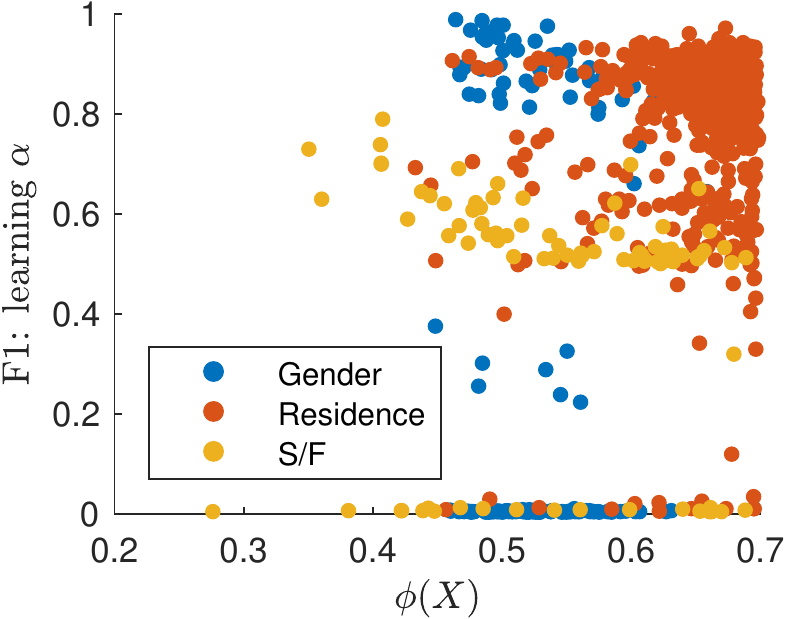}}\hfill
	\centering
	\subfloat[$\pp_X$: Grad Year\label{fig:fb3}]
	{\includegraphics[width=.4\linewidth]{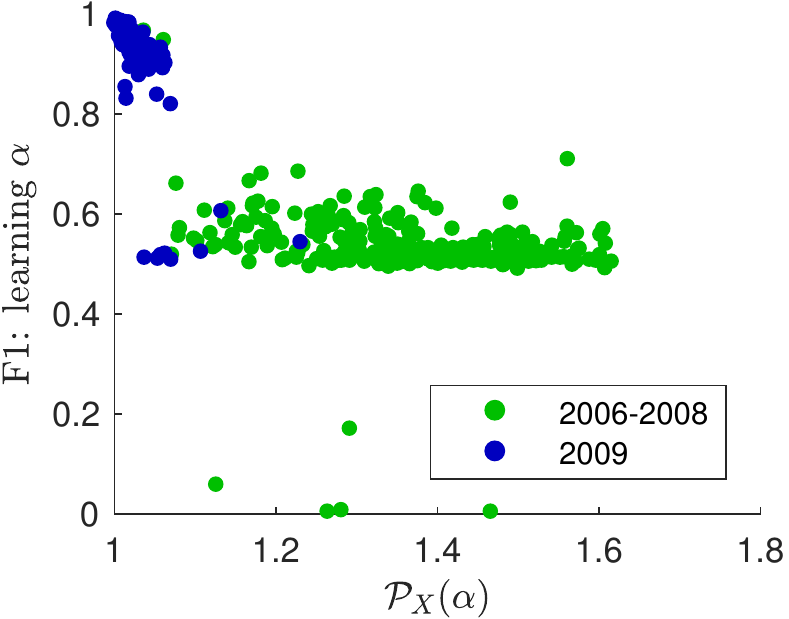}}\hfill
	\subfloat[$\phi(X)$: Grad Year\label{fig:fb4}]
	{\includegraphics[width=.4\linewidth]{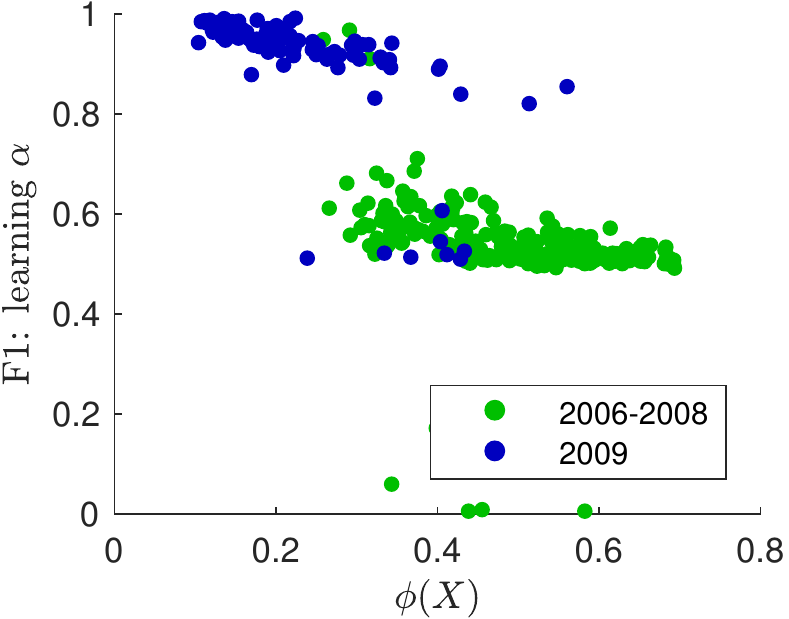}}
	\caption{Minimizing $\pp_X$ gives us refined information about the connectivity structure of different sets sharing metadata attributes in Facebook 100 datasets. Plotting F1 detection scores against the minimum of $\pp_X$ shows especially clear trends for the gender and residence metadata attributes. Plots for the graduation year attribute highlight an anomaly in the connectivity patterns of the 2009 graduating year classes. We explore this in further depth in the main text.}
	\label{fig:fb}
	\vspace{-.5\baselineskip}
\end{figure}
\paragraph{Results} While computing conductance scores provides a good first order measure of a node's community structure, we find that minimizing $\pp$ provides more refined information for the detectability of clusters. In Figure~\ref{fig:fb} we show scatter plots of F1 detection scores against both $\min \pp_X$ as well as $\phi(X)$ for each metadata set $X$. We see that especially for the gender and residence metadata sets across all networks, there is a much clearer relationship between F1 scores and $\min \pp_X$. Values of $\pp_X$ very close to 1 map to F1 scores near 1, and as $\pp_X$ increases we see a downward sloping trend in F1 scores. In the conductance plot we do not see the same trend.

\begin{figure}[t]
	\centering
	\subfloat[Conductance \label{fig:fb5}]
	{\includegraphics[width=.3\linewidth]{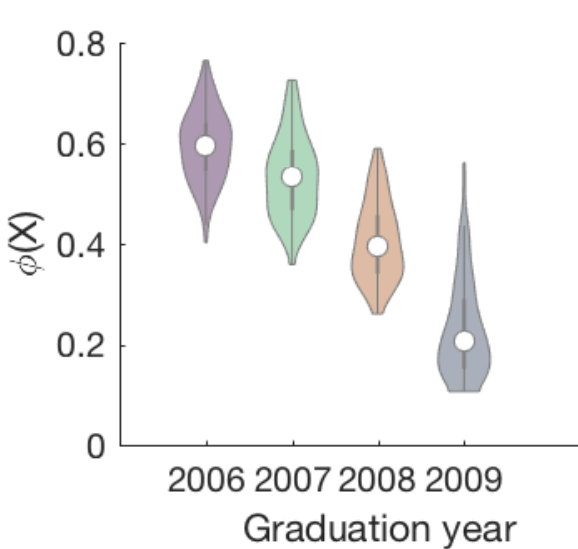}}\hfill
	\subfloat[Volume \label{fig:fb6}]
	{\includegraphics[width=.3\linewidth]{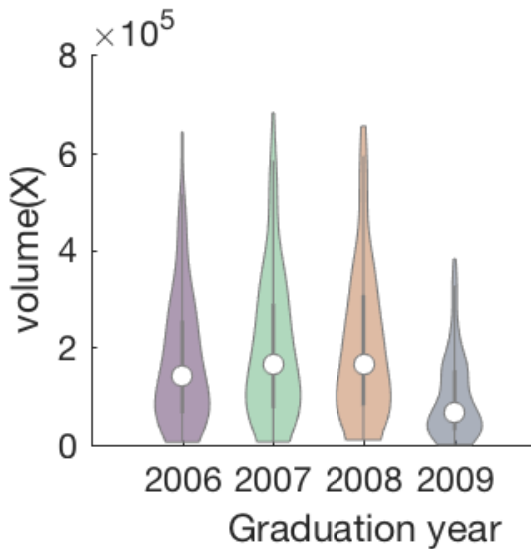}}\hfill
	\centering
	\subfloat[Cut \label{fig:fb7}]
	{\includegraphics[width=.3\linewidth]{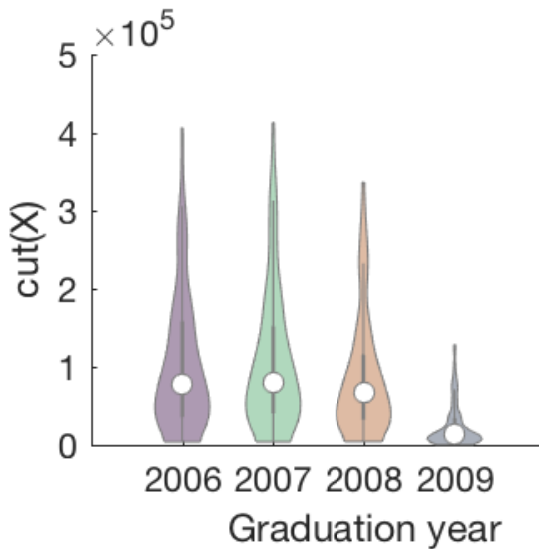}}
	\caption{
		As graduation year increases, conductance scores on the whole tend to decrease. White dots indicate median value. 
		The 2009 graduation year metadata sets tend to be much smaller in volume, but also have very small cut scores, indicating that freshman in 2005 were largely connecting on Facebook with people in their same class.}
	\label{fig:cond}
	\vspace{-.5\baselineskip}
\end{figure}
Figures~\ref{fig:fb3} and~\ref{fig:fb4} show results for metadata sets associated with the 2006-2009 graduation years. For this attribute there appears to be a relationship between both conductance and the $\min \pp$ scores. Furthermore, in both plots we see a separation of the points roughly into two clusters. A deeper exploration of these trends reveals that the 2009 graduation class accounts for the majority of one of these two clusters, and there appears to be an especially clear trend between F1 detection scores and both $\phi(X)$ and $\pp_X$ for this class. 
In order to explain this, we further investigated the connectivity patterns of the main four student classes across all universities. 

\paragraph{New Insights}
Figure~\ref{fig:cond} shows violin plots for $\phi(X)$, $\cut(X)$, and $\vol(X)$ for metadata sets associated with graduation years from 2006 to 2009. Overall, conductance decreases as graduation year increases. We notice that class sizes for the 2009 graduation year are much smaller on average. When these datasets were generated, Facebook users needed a .edu email address to register an account. Thus, in September 2005, the graduation class of 2009 was made up primarily of new freshman who just started college, many of whom had not registered a Facebook account yet. Interestingly, we see a slight decrease in the median cut score from 2007 to 2008, and a significant decrease from 2008 to 2009 (Figure~\ref{fig:fb7}). This suggests that although there were fewer freshman on Facebook at the time, on average they had a greater tendency to establish connections on Facebook among peers in their same graduation year.

Figure~\ref{fig:cond} suggests that in the early months of Facebook, with each new year, students in the same graduating class tended to form tighter Facebook circles with members in their own class. To further explore this hypothesis, for each of the 100 Facebook datasets we consider each node from a graduating class between 2006 and 2009. In each network we compute the average \emph{in-class connection ratio}, i.e., the number of Facebook friends each person has inside the same graduating class, divided by the total number of Facebook connections that the person has across the entire university. In 97 out of 100 datasets (all networks except Caltech36, Hamilton46, and Santa74), this ratio strictly increases as graduation year increases. For Hamilton46 and Santa74, the ratio is still significantly higher for the 2009 graduation class than any other class. If we average this ratio across all networks, as the graduation year increases from 2006 to 2009, the ratios strictly increase: 0.39 for 2006, 0.45 for 2007, 0.57 for 2008, and 0.75 for the class of 2009. In other words, 75\% of an average college freshman's Facebook friends were also freshman, whereas only 39\% of an average senior's Facebook friends were seniors.

Traud et al.~\cite{traud2012facebook} were the first to note the influence of the graduation year attribute on the connectivity structure of Facebook 100 networks. Later, Jacobs et al.~\cite{Jacobs2015} observed differences in the way subgraphs associated with different graduation years evolved and matured over time. These authors noted in particular that the subgraphs associated with the class of 2009 tend to exhibit very skewed degree distributions and comparatively low average degrees. Our observations complement these results, by highlighting heterogeneous behavior in the way members of different classes interacted and connected with one another during the early months of Facebook.



\section{Discussion and Future Work}
We have introduced a new framework and theory for learning resolution parameters based on minimizing a fitness function associated with a single example clustering of interest. 
There are several open questions for improving our specific approach. Our bisection-like algorithm is designed to be general enough to minimize a large class of functions to within arbitrary precision.  However, by making additional assumptions on either specific clustering objectives or the fixed example clustering, one may be able to develop improved algorithms for minimizing the parameter fitness function in practice. Another open question is to study which other graph clustering objectives can fit into out framework, beyond just the \lcc global objective and the local flow clustering objective, and whether, for example, the approach can be applied to clustering in directed graphs.

Our work can be viewed as one approach to the more general goal of learning objective functions for graph clustering applications. This general goal could involve more techniques than simply learning resolution parameters. For example, in future work we wish to explore how to learn small \emph{motif} subgraph patterns~\cite{Benson163} in an example clustering that may be indicative of a desirable type of clustering structure in an application of interest.

\section*{Acknowledgments} The authors thank several funding agencies: Nate Veldt is supported by NSF award IIS-154648, David Gleich is supported by the DARPA SIMPLEX program, the Sloan Foundation, and NSF awards CCF-1149756, CCF-0939370, and IIS-154648. Anthony Wirth is funded by the Australian Research Council.

\bibliographystyle{abbrv}
\bibliography{www19}

\end{document}